\begin{document}

\newtheorem{prop}{Proposition}
\newtheorem*{lem}{Lemma}
\newtheorem{cor}{Corollary}

\title{Simple state preparation for contextuality tests 
with few observables}
\author{S. Camalet}
\affiliation{Laboratoire de Physique Th\'eorique 
de la Mati\`ere Condens\'ee, UPMC, CNRS UMR 7600, 
Sorbonne Universit\'es, 4 Place Jussieu, 75252 Paris 
Cedex 05, France}

\begin{abstract}
We consider any noncontextuality inequality, and 
the state preparation scheme which consists in 
performing any von Neumann measurement on 
any initial state. For an inequality which is not always 
satisfied, and Hilbert space dimensions greater than 
a value specified by the inequality, we determine 
necessary and sufficient conditions for the existence 
of observables with which the inequality is violated 
after the preparation process. For an initial 
state with no zero eigenvalues, there are always 
such observables, and which are independent of 
this state.
\end{abstract} 

\pacs{03.65.Ta, 03.65.Ud, 03.67.-a}

\maketitle 
  
Quantum mechanics is contextual. Measurement 
outcomes are not predetermined independently of 
the measurements actually performed \cite{B1,KS}. 
This can be revealed with a finite set of observables, 
such that each observable is compatible with some other 
ones, but not with all. When evaluated with 
a noncontextual hidden-variable theory, the correlations 
of the compatible observables, satisfy inequalities, which 
can be violated by quantum systems. Well-known 
examples of such noncontextuality inequalities are 
Clauser-Horne-Shimony-Holt (CHSH) and 
Klyachko-Can-Binicio\u glu-Shumovsky (KCBS) 
inequalities \cite{B2,CHSH,F,KCBS}. Inequalities have 
been obtained, which are disobeyed for any state of 
the considered system. Moreover, this holds for fixed sets 
of observables, independent of the state. 
These inequalities involve 13 dichotomic observables for 
a three-level system \cite{YO,C1,CKB}, and 9 for 
a four-level system \cite{P,M,C2}.  However, contextuality 
can be revealed with less observables, 4 are enough 
for a $d$-level system with $d \ge 4$, using CHSH 
inequality.  

But contextuality tests with few observables, have two 
drawbacks. First, for some states, the noncontextuality 
inequality is satisfied with any set of observables 
obeying the required compatibility relations. Second, 
for the other states, the observables must be chosen 
according to the state, in order to violate the inequality 
\cite{G,ED,KK,XSC,RH}. 
A noncontextuality inequality involves a sum of 
expectation values. Thus, if it cannot be violated for pure 
states, it is always satisfied, and is hence not a proper 
contextuality test. The maximally mixed state of 
the considered system also plays a particular role. 
For three-level systems, an inequality, with 
9 observables, has been found, that can be violated 
for any state except the maximally mixed one 
\cite{KK}. The relation between the mixedness 
of a state, in the sense of majorization \cite{MO,HLP}, 
and its usefulness for revealing quantum contextuality, 
has been clarified in Ref.\cite{RH}. The eigenvalues 
of the state do not alone dictate if a given 
noncontextuality inequality can be disobeyed. 
The dimension of the system Hilbert space is 
an important parameter, since it determines 
the set of potentially accessible observables 
\cite{GBCKL}.  

In this Letter, we show that the two above mentioned 
drawbacks do not mean that violating a given 
noncontextuality inequality necessarily requires 
a very efficient state preparation. We consider the state 
preparation scheme which consists in starting from 
any initial state, and performing any von Neumann 
measurement. When the dimension $d$ of the system 
Hilbert space is greater than a value, determined by 
the inequality, which is, e.g., 4 for CHSH inequality, and 
the inequality is a proper contextuality test, it cannot be 
violated after the preparation measurement, if only if 
a single outcome of this measurement has nonzero 
probability, and the inequality is always satisfied for 
the initial state. This exceptional case is easy to detect, 
and cannot occur if the ranks of the measurement 
projectors are lower than $d$ minus a constant, 
specified by the inequality. This constant is equal to $2$ 
for CHSH inequality, for example. Moreover, when 
the initial state has no zero eigenvalues, the inequality 
is necessarily violated after the preparation process, 
and with observables independent of this state.
 
A noncontextuality inequality, with $N$ dichotomic 
observables $A_k$, reads 
\begin{equation}
\sum_n x_n \big\langle \prod_{k \in {\cal E}_n}  
A_k \big\rangle \le 1 , \label{nci}
\end{equation}
where $\langle \ldots \rangle=$Tr$(\rho \ldots)$ 
denotes the average with respect to the quantum 
state $\rho$. The subsets $
{\cal E}_n \subset \{ 1, \ldots , N \}$, are such 
that $[A_k,A_l]=0$ when $k$ and $l$ both belong 
to ${\cal E}_n$. The number of terms in the sum 
and the coefficients $x_n$ depend on the inequality 
considered. A noncontextuality inequality is always 
satisfied when the observables $A_k$ are replaced 
by classical random variables $a_k=\pm 1$, and 
the average is evaluated with respect to a probability 
distribution of these variables. Moreover, 
the coefficients $x_n$ are such that the maximum 
value of the left-hand side of eq.\eqref{nci} with 
classical random variables, is $1$. Thus, a violation 
of inequality \eqref{nci} clearly indicates that 
the obtained value cannot be accounted for by 
a noncontextual hidden-variable theory. 

The states $\rho$ of a $d$-level system, for which 
a given inequality \eqref{nci} is violated with 
appropriate observables $A_k$, are determined 
by the function C$_d$ defined as 
\begin{equation}
\mathrm{C}_d(\rho) = \max_{{\bf A} \in {\cal A}_d} 
\mathrm{Tr} \big[ \rho T({\bf A}) \big] , \label{Cdef}
\end{equation}
where ${\bf A}=( A_1, \ldots , A_N )$, 
$T({\bf A})=\sum_n x_n \prod_{k \in {\cal E}_n}  A_k$, 
and ${\cal A}_d$ denotes the set of all ${\bf A}$ 
consisting of dichotomic observables 
$A_k$, of the $d$-level system, 
which obey $[A_k,A_l]=0$ for $k,l \in {\cal E}_n$.
Note that this definition depends on the dimension $d$ 
of the considered Hilbert space. By construction, 
for a state $\rho$ such that $\mathrm{C}_d(\rho) \le 1$, 
inequality \eqref{nci} is satisfied with any dichotomic 
observables $A_k$ obeying the required commutation 
relations.

It results directly from the definition \eqref{Cdef} that 
the function C$_d$ is continuous, invariant under 
unitary transformations of $\rho$, and convex.
\begin{prop}\label{bp} 
For any states $\rho_m$, unitary operator $U$, and 
probabilities $p_m$ such that $\sum_m p_m=1$, 
\begin{enumerate}[label=\roman*)]
\item $\big| \mathrm{C}_d(\rho_1)
-\mathrm{C}_d(\rho_2) \big| \le \sqrt{d} \sum_n |x_n| 
\mathrm{Tr} \big[ (\rho_1-\rho_2)^2 \big]^{1/2}   
\label{cont}$,
\item $\mathrm{C}_d(U\rho_1 U^\dag)
=\mathrm{C}_d(\rho_1)  \label{Uinv}$,
\item $\mathrm{C}_d\big(\sum_m p_m \rho_m \big) 
\le \sum_m p_m \mathrm{C}_d(\rho_m) \label{conv}$. 
\end{enumerate}
\end{prop}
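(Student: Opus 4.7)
The plan is to handle the three properties in the order (iii), (ii), (i), since the first two are essentially structural and set up the estimate needed for continuity.

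For convexity (iii), I would exploit that $\mathrm{C}_d$ is defined as a maximum over $\mathbf{A}$ of the affine-in-$\rho$ functional $\rho\mapsto\mathrm{Tr}[\rho T(\mathbf{A})]$. A pointwise maximum of affine functions is convex: for any fixed $\mathbf{A}\in\mathcal{A}_d$, $\mathrm{Tr}[(\sum_m p_m\rho_m)T(\mathbf{A})]=\sum_m p_m\mathrm{Tr}[\rho_m T(\mathbf{A})]\le\sum_m p_m\mathrm{C}_d(\rho_m)$, and then take the maximum over $\mathbf{A}$ on the left-hand side.

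For unitary invariance (ii), the key observation is that the map $\mathbf{A}\mapsto U^\dag\mathbf{A}U=(U^\dag A_1U,\dots,U^\dag A_NU)$ is a bijection of $\mathcal{A}_d$ onto itself: each $U^\dag A_kU$ has the same spectrum $\{\pm1\}$ as $A_k$, and conjugation preserves commutators, so the required commutation relations on each $\mathcal{E}_n$ survive. Combining this with $\mathrm{Tr}[U\rho U^\dag T(\mathbf{A})]=\mathrm{Tr}[\rho T(U^\dag\mathbf{A}U)]$ and reparametrising the maximum yields the equality.

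The main work is continuity (i), and the step I expect to be the crux is bounding the Hilbert--Schmidt norm of $T(\mathbf{A})$ uniformly in $\mathbf{A}\in\mathcal{A}_d$. I would start from the general inequality $|\max_x f(x)-\max_x g(x)|\le\max_x|f(x)-g(x)|$ applied to $f(\mathbf{A})=\mathrm{Tr}[\rho_1 T(\mathbf{A})]$ and $g(\mathbf{A})=\mathrm{Tr}[\rho_2T(\mathbf{A})]$, reducing the problem to bounding $|\mathrm{Tr}[(\rho_1-\rho_2)T(\mathbf{A})]|$. Applying the Cauchy--Schwarz inequality for the Hilbert--Schmidt inner product gives $|\mathrm{Tr}[(\rho_1-\rho_2)T(\mathbf{A})]|\le\|\rho_1-\rho_2\|_2\,\|T(\mathbf{A})\|_2$, where the first factor is exactly $\mathrm{Tr}[(\rho_1-\rho_2)^2]^{1/2}$.

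The remaining obstacle is the estimate $\|T(\mathbf{A})\|_2\le\sqrt{d}\sum_n|x_n|$. By the triangle inequality it suffices to show $\|\prod_{k\in\mathcal{E}_n}A_k\|_2\le\sqrt{d}$ for each $n$. Since the $A_k$ with $k\in\mathcal{E}_n$ mutually commute and each has spectrum in $\{\pm1\}$, they are simultaneously diagonalisable and their product is a Hermitian operator whose eigenvalues all lie in $\{\pm1\}$, so $(\prod_{k\in\mathcal{E}_n}A_k)^2$ has spectrum $\{1\}$, giving Hilbert--Schmidt norm squared equal to $\mathrm{Tr}(\mathbf{1})=d$. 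Assembling the bounds yields (i). This dichotomic-product lemma is the only nontrivial ingredient; everything else is straightforward manipulation of suprema.
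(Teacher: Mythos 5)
Your proposal is correct and follows essentially the same route as the paper: convexity as a pointwise maximum of affine functionals, unitary invariance via conjugating the observables (the paper phrases it as a one-sided inequality valid for all $U$ rather than as a bijection, which is the same argument), and continuity via Cauchy--Schwarz in the Hilbert--Schmidt inner product with the bound $\|T(\mathbf{A})\|_2\le\sqrt{d}\sum_n|x_n|$. The only cosmetic difference is that the paper obtains that bound by noting each eigenvalue of $T(\mathbf{A})$ is at most $\sum_n|x_n|$ in modulus, whereas you apply the triangle inequality termwise using that each product of commuting dichotomic observables is itself dichotomic; both rest on the same fact and give the same constant.
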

\begin{proof}
i) It follows from the Cauchy-Schwarz inequality that 
$\mathrm{Tr} [  \omega T({\bf A}) ]^2 \le 
\mathrm{Tr} (\omega ^2 ) \sum_{i=1}^d t_i^2$, 
where $\omega=\rho_1-\rho_2$, and $t_i$ denotes 
the eigenvalues of $T({\bf A})$. Since 
$|\langle \prod_k  A_k \rangle|  \le 1$ for any state $\rho$, 
$|t_i| \le \sum_n |x_n|$. Consequently, for any 
${\bf A} \in {\cal A}_d$, 
$\mathrm{Tr} [ \rho_{1/2} T({\bf A}) ] 
\le \mathrm{C}_d(\rho_{2/1}) + 
[d \mathrm{Tr} (\omega ^2 )]^{1/2} \sum_n |x_n|$. 
Maximizing over ${\bf A}$ completes the proof of 
point \ref{cont}.

ii) $\mathrm{Tr} [  U\rho_1 U^\dag T({\bf A}) ] 
= \mathrm{Tr} [  \rho_1 T({\bf B}) ]$ where 
$B_k=U^\dag A_k U$. For ${\bf A} \in {\cal A}_d$, 
the observables $B_k$ satisfy the commutation relations 
$[B_k,B_l]=U^\dag[A_k,A_l] U = 0$ for $k,l \in {\cal E}_n$, 
and are dichotomic, since $B_k^2=U^\dag A_k^2 U=I_d$ 
where $I_d$ is the $d$-dimensional identity operator. 
Consequently, ${\bf B}$ belongs to ${\cal A}_d$. Thus, 
the above equality yields 
$\mathrm{Tr} [  U\rho_1 U^\dag T({\bf A}) ] 
\le \mathrm{C}_d(\rho_1)$ for any ${\bf A} \in {\cal A}_d$, 
and hence $\mathrm{C}_d(U\rho_1 U^\dag) \le 
\mathrm{C}_d(\rho_1)$. Since this inequality is valid 
for any $\rho_1$ and $U$, the equality holds for 
any $\rho_1$ and $U$.

iii) By linearity of the trace, 
$\mathrm{Tr} [  \sum_m p_m \rho_m T({\bf A})  ]
\le \sum_m p_m \mathrm{C}_d(\rho_m)$ 
for any ${\bf A} \in {\cal A}_d$, which proves \ref{conv}.
\end{proof}
From point \ref{cont} of proposition \ref{bp}, it ensues that, 
any state $\rho$ such that $\mathrm{C}_d(\rho)>1$, has 
a neighborhood of states which can violate eq.\eqref{nci}. 
Thus, no noncontextuality inequality can be disobeyed 
only for pure states. Points \ref{Uinv} and \ref{conv} 
show that applying unitary transformations to states $\rho$ 
such that $\mathrm{C}_d(\rho) \le 1$, or preparing 
statistical mixtures of such states, cannot lead to 
a violation of inequality \eqref{nci}.  Another result 
of point \ref{Uinv} is that $\mathrm{C}_d(\rho)$ depends 
only on the eigenvalues of the state $\rho$. 

The convexity and invariance under unitary 
transformations of C$_d$ have the following consequence 
for positive operator-valued measurements. From now on, 
we use the notation
\begin{equation}
\mathrm{C}_d(F,\rho) =\mathrm{C}_d
\big[F\rho F^{\dag}/\mathrm{Tr}(F^{\dag}F\rho)\big] ,
\end{equation}
where $\rho$ is a state, and $F$ any operator such that 
$F\rho F^{\dag} \ne 0$, of a $d$-level system.
\begin{cor}
For any state $\rho$, and operators $F_m$ such 
that $\sum_m F^\dag_m F^{\phantom{\dag}}_m=I_d$, 
of a $d$-level system,  $\mathrm{C}_d(\rho) \le 
\max_{m\in {\cal E}} \mathrm{C}_d(F_m,\rho)$,
where ${\cal E}=\{ m : 
F^{\phantom{\dag}}_m \rho F^\dag_m \ne 0 \}$. 
\end{cor}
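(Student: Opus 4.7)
The plan is to express $\rho$ as a convex combination of density matrices, each unitarily equivalent to one of the conditional post-measurement states $\sigma_m := F_m \rho F_m^\dag/p_m$ with $p_m = \mathrm{Tr}(F_m^\dag F_m \rho)$, and then invoke the convexity and unitary invariance of $\mathrm{C}_d$ established in Proposition~\ref{bp}.

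First I would apply the polar decomposition to the operator $F_m \rho^{1/2}$, writing $F_m \rho^{1/2} = V_m H_m$ with $H_m := (\rho^{1/2} F_m^\dag F_m \rho^{1/2})^{1/2}$ and $V_m$ a partial isometry that I extend, arbitrarily on its kernel, to a unitary on $\mathbb{C}^d$. Next, evaluating $(F_m \rho^{1/2})(F_m \rho^{1/2})^\dag$ in two ways yields
\[
F_m \rho F_m^\dag = V_m \bigl( \rho^{1/2} F_m^\dag F_m \rho^{1/2} \bigr) V_m^\dag ,
\]
so that, setting $\tau_m := \rho^{1/2} F_m^\dag F_m \rho^{1/2}/p_m$ for $m \in \mathcal{E}$, the normalized post-measurement state $\sigma_m$ is nothing but $V_m \tau_m V_m^\dag$.

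The POVM completeness relation $\sum_m F_m^\dag F_m = I_d$ then gives
\[
\rho = \rho^{1/2} I_d \rho^{1/2} = \sum_{m \in \mathcal{E}} p_m \tau_m ,
\]
which is a genuine convex decomposition of $\rho$ (the terms with $p_m = 0$ contribute nothing, justifying the restriction to $\mathcal{E}$). Applying the convexity and then the unitary invariance of $\mathrm{C}_d$ (points iii) and ii) of Proposition~\ref{bp}) delivers
\[
\mathrm{C}_d(\rho) \le \sum_{m \in \mathcal{E}} p_m \mathrm{C}_d(\tau_m) = \sum_{m \in \mathcal{E}} p_m \mathrm{C}_d(F_m,\rho) \le \max_{m \in \mathcal{E}} \mathrm{C}_d(F_m,\rho) ,
\]
which is the claimed bound.

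The only genuinely creative step is the polar-decomposition observation: each unnormalized branch $F_m \rho F_m^\dag$ is a unitary conjugate of the ``left-shifted'' operator $\rho^{1/2} F_m^\dag F_m \rho^{1/2}$, and these conjugates sum to $\rho$ by completeness. A naive attempt based on $\sum_m F_m \rho F_m^\dag$ instead would produce the wrong convex combination (the average post-measurement state, which is in general more mixed than $\rho$); the polar-decomposition detour is what routes around this obstruction. Once that identification is made, the rest is just bookkeeping with the two properties of $\mathrm{C}_d$ already in hand.
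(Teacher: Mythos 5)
Your proof is correct and follows essentially the same route as the paper: both rest on the decomposition $\rho=\sum_{m\in\mathcal{E}}p_m U_m\sigma_m U_m^\dag$ combined with convexity and unitary invariance of $\mathrm{C}_d$ from Proposition~\ref{bp}. The only difference is that the paper cites this decomposition from a textbook reference, whereas you derive it explicitly via the polar decomposition of $F_m\rho^{1/2}$ --- a correct and self-contained way to obtain the same identity.
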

\begin{proof} 
There are unitary operators $U_m$ such that 
$\rho=\sum_{m\in {\cal E}} p_m U^{\phantom{\dag}}_m
\rho_m U_m^\dag$ where 
$p_m=\mathrm{Tr}(F^\dag_mF^{\phantom{\dag}}_m \rho)$ 
and $\rho_m=F^{\phantom{\dag}}_m \rho F^\dag_m/p_m$ 
\cite{QMT}. Thus, with proposition \ref{bp}, 
$\mathrm{C}_d(\rho) \le \sum_{m \in {\cal E}} p_m 
\mathrm{C}_d( \rho_m)$, which leads, 
with $\sum_{m \in {\cal E}} p_m=1$, to the result.
\end{proof}
In other words, for any measurement, at least one 
resulting state $\rho_m$ gives a value of $\mathrm{C}_d$ 
which can exceed that of the initial state. However, 
this obviously does not guarantee that inequality \eqref{nci} 
can be violated. In the following, we show conditions 
under which this is the case for von Neumann 
measurements, i.e., if the operators $F_m$ are 
projectors. 

Below, we make use of the majorization relation, which is 
defined as follows. Consider two real $d$-component 
vectors $\bf a$ and $\bf b$, and the vectors 
$\bf a^\downarrow$ and $\bf b^\downarrow$ obtained 
from $\bf a$ and $\bf b$, respectively, by rearranging 
their components in decreasing order, i.e., 
$a_i^\downarrow \ge a_{i+1}^\downarrow$. It is said 
that $\bf a$ majorizes $\bf b$, denoted $\bf a \succ \bf b$, 
iff, for $j=1, \ldots, d$, $\sum_{i=1}^j a_i^\downarrow 
\ge \sum_{i=1}^j b_i^\downarrow$, with equality for $j=d$. 
For density matrices, $\rho_1 \succ \rho_2$ iff 
$\boldsymbol \lambda (\rho_1) 
\succ \boldsymbol \lambda (\rho_2)$, 
where the spectrum $\boldsymbol \lambda (A)$ is 
the vector made up of the eigenvalues of the Hermitian 
operator $A$, in decreasing order \cite{MO,HLP}. 
The majorization relation is generalized to states 
of systems of different sizes, by extending with zeros 
the spectrum with less eigenvalues. The next 
proposition will be proved using the following lemma.
\begin{lem}
Consider three real $d$-component 
vectors $\bf a$, $\bf b$ and $\bf c$. 
If ${\bf a} \succ {\bf b}$ 
then ${\bf b} \cdot {\bf c} \le {\bf a}^\downarrow 
\cdot {\bf c}^\downarrow \label{lem}$, 
where ${\bf a} \cdot {\bf b}=\sum_{i=1}^d a_i b_i$. 
\end{lem}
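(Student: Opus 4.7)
The plan is to reduce the inequality to the case where all three vectors are already in decreasing order, and then use Abel summation (summation by parts). First, by the classical rearrangement inequality, $\mathbf{b} \cdot \mathbf{c} \le \mathbf{b}^\downarrow \cdot \mathbf{c}^\downarrow$, since $\mathbf{b} \cdot \mathbf{c}$ is a sum of products that is maximized when the two vectors are similarly ordered. Hence it suffices to prove $\mathbf{b}^\downarrow \cdot \mathbf{c}^\downarrow \le \mathbf{a}^\downarrow \cdot \mathbf{c}^\downarrow$, i.e. $\sum_{i=1}^d (a_i^\downarrow - b_i^\downarrow)\, c_i^\downarrow \ge 0$.

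For this I would introduce the partial sums $S_j = \sum_{i=1}^j (a_i^\downarrow - b_i^\downarrow)$, with the convention $S_0=0$. The hypothesis $\mathbf{a} \succ \mathbf{b}$ gives precisely $S_j \ge 0$ for $1 \le j \le d-1$, together with $S_d = 0$. Writing $a_i^\downarrow - b_i^\downarrow = S_i - S_{i-1}$ and rearranging the sum yields
\begin{equation}
\sum_{i=1}^d (S_i - S_{i-1})\, c_i^\downarrow
= \sum_{i=1}^{d-1} S_i\, (c_i^\downarrow - c_{i+1}^\downarrow)
+ S_d c_d^\downarrow - S_0 c_1^\downarrow ,
\end{equation}
where the boundary terms vanish. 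Every remaining summand is nonnegative, because $S_i \ge 0$ by majorization and $c_i^\downarrow - c_{i+1}^\downarrow \ge 0$ by the decreasing ordering of $\mathbf{c}^\downarrow$. Combining this with the rearrangement step above gives $\mathbf{b} \cdot \mathbf{c} \le \mathbf{a}^\downarrow \cdot \mathbf{c}^\downarrow$.

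The two ingredients are elementary, so I do not expect a genuine obstacle; the only mild subtlety is keeping the ordering conventions straight, in particular recognizing that the rearrangement inequality lets us freely reorder $\mathbf{b}$ and $\mathbf{c}$ before applying Abel summation. One could alternatively invoke the known characterization that $\mathbf{a} \succ \mathbf{b}$ iff $\mathbf{b}$ lies in the convex hull of the permutations of $\mathbf{a}$, but the Abel-summation route is shorter and self-contained.
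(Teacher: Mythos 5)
Your proof is correct and is essentially identical to the paper's: both use the rearrangement inequality to reduce to decreasingly ordered vectors, then apply Abel summation to the partial sums of $\mathbf{a}^\downarrow - \mathbf{b}^\downarrow$ (your $S_j$ is just the negative of the paper's $R_j$). No differences worth noting.
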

\begin{proof} 
It is already known that 
${\bf b} \cdot {\bf c} \le {\bf b}^\downarrow 
\cdot {\bf c}^\downarrow$ \cite{MO,HLP}.
We define $R_j=\sum_{i=1}^j 
(b^\downarrow_i- a^\downarrow_i)$ for $j=1,\ldots,d$. 
Since ${\bf a} \succ {\bf b}$, $R_j \le 0$ and $R_d=0$.
Thus, $({\bf b}^\downarrow  - {\bf a}^\downarrow)
\cdot {\bf c}^\downarrow =\sum_{j=1}^{d-1}
(c^\downarrow_j-c^\downarrow_{j+1})R_j \le 0$. 
\end{proof}

To investigate the influence of the Hilbert space dimension 
$d$, we define the application 
${\cal G}: {\bf A'} \mapsto {\bf A}$ as follows. As mentioned 
above, for any noncontextuality inequality \eqref{nci}, 
there is $(a_1, \ldots, a_N)$ such that $a_k = \pm 1$ and 
$\sum_n x_n  \prod_{k \in {\cal E}_n} a_k = 1$. 
For any ${\bf A'} \in {\cal A}_{d'}$, 
${\bf A}=( A_1, \ldots , A_N ) \in {\cal A}_d$, is given 
by $\langle \tilde \imath | A_k |\tilde \jmath \rangle=
\langle i | A'_k | j \rangle$, for $d=d'$, and by
\begin{equation}
A_k = \sum_{i,j=1}^{d'} \langle i | A'_k | j \rangle 
|\tilde \imath \rangle \langle \tilde \jmath |
+a_k \sum_{i=d'+1}^d 
|\tilde \imath \rangle \langle \tilde \imath | , \label{Gfun}
\end{equation}
for $d>d'$, where $\{ | i \rangle \}_{i=1}^{d'}$ and 
$\{ | \tilde \imath \rangle \}_{i=1}^d$ are orthonormal 
bases of the considered Hilbert spaces. With matrix 
representations of the observables $A_k$ and $A'_k$, 
it is straightforward to show that, 
when ${\bf A'} \in {\cal A}_{d'}$, the observables $A_k$ 
are dichotomic and obey the required commutation 
relations, and that the spectrum of $T({\bf A})$ consists 
of the $d'$ eigenvalues $\lambda_i[T({\bf A'})]$ and 
of $d-d'$ ones. Using the lemma and function ${\cal G}$, 
the following can be shown.
\begin{prop}\label{Sc}
Consider a state $\rho$ of a $d$-level system, and a state 
$\rho'$ of a $d'$-level system. If $\rho \succ \rho'$ and 
$d \ge d'$ then 
$\mathrm{C}_d(\rho) \ge \mathrm{C}_{d'}(\rho')$ . 
\label{Scg}
\end{prop}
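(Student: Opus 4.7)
The plan is to construct an explicit $\mathbf{A}\in\mathcal{A}_d$ at which $\mathrm{Tr}[\rho T(\mathbf{A})]\ge \mathrm{C}_{d'}(\rho')$, by lifting an optimizer for $\rho'$ via the application $\mathcal{G}$ and then using the lemma to absorb the majorization. First I would pick $\mathbf{A}^*\in\mathcal{A}_{d'}$ achieving $\mathrm{C}_{d'}(\rho')=\mathrm{Tr}[\rho' T(\mathbf{A}^*)]$, and use the unitary invariance of $\mathcal{A}_{d'}$ (from item ii) of the proof of proposition \ref{bp}) to rotate it so that $T(\mathbf{A}^*)$ is diagonal in the eigenbasis of $\rho'$ with eigenvalues sorted in the same decreasing order as those of $\rho'$; then $\mathrm{C}_{d'}(\rho')=\boldsymbol{\lambda}(\rho')^\downarrow\cdot\boldsymbol{\lambda}[T(\mathbf{A}^*)]^\downarrow$.

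Next I would set $\mathbf{A}=\mathcal{G}(\mathbf{A}^*)\in\mathcal{A}_d$. From the block structure in eq.~\eqref{Gfun} together with the classical-maximum identity $\sum_n x_n\prod_{k\in\mathcal{E}_n}a_k=1$, the lower $(d-d')\times(d-d')$ block of $T(\mathbf{A})$ equals $I_{d-d'}$, so the spectrum of $T(\mathbf{A})$ consists of the $d'$ eigenvalues of $T(\mathbf{A}^*)$ together with $d-d'$ eigenvalues equal to $1$. A further unitary conjugation of $\mathbf{A}$ inside $\mathcal{A}_d$ then aligns $T(\mathbf{A})$ with $\rho$ in the same way, yielding
\[
\mathrm{C}_d(\rho)\ge\mathrm{Tr}[\rho T(\mathbf{A})] = \boldsymbol{\lambda}(\rho)^\downarrow\cdot\boldsymbol{\lambda}[T(\mathbf{A})]^\downarrow.
\]

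To close the chain I would invoke the lemma in dimension $d$ with $\mathbf{a}=\boldsymbol{\lambda}(\rho)$, $\mathbf{c}=\boldsymbol{\lambda}[T(\mathbf{A})]$, and $\mathbf{b}$ equal to $\boldsymbol{\lambda}(\rho')$ padded with $d-d'$ zeros; the convention for majorization between states of different sizes turns $\rho\succ\rho'$ into $\mathbf{a}\succ\mathbf{b}$, so the lemma gives $\mathbf{b}^\downarrow\cdot\mathbf{c}^\downarrow\le\mathbf{a}^\downarrow\cdot\mathbf{c}^\downarrow$. Since the last $d-d'$ entries of $\mathbf{b}^\downarrow$ vanish, the left-hand side is $\sum_{i=1}^{d'}\lambda^\downarrow_i(\rho')\,\lambda^\downarrow_i[T(\mathbf{A})]$, and because augmenting a multiset by $d-d'$ elements can only weakly raise each of its first $d'$ largest entries, $\lambda^\downarrow_i[T(\mathbf{A})]\ge\lambda^\downarrow_i[T(\mathbf{A}^*)]$ for $i\le d'$; combining this with $\lambda^\downarrow_i(\rho')\ge 0$ gives $\mathbf{b}^\downarrow\cdot\mathbf{c}^\downarrow\ge\mathrm{C}_{d'}(\rho')$, completing the chain. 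I expect the most delicate point to be the bookkeeping of the two independent unitary alignments (of $\mathbf{A}^*$ against $\rho'$ and of $\mathcal{G}(\mathbf{A}^*)$ against $\rho$) and the correct positioning of the $d-d'$ added $1$'s inside the sorted spectrum of $T(\mathbf{A})$; once those are set up, only the lemma and nonnegativity of eigenvalues are used.
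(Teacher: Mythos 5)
Your proposal is correct and follows essentially the same route as the paper: lift an optimizer through $\mathcal{G}$, align spectra by unitary conjugation, and apply the lemma to the zero-padded spectrum of $\rho'$, using that the lifted spectrum dominates the original one entrywise. The only step you leave tacit is that rotating $\mathbf{A}^*$ into the sorted eigenbasis of $\rho'$ cannot decrease $\mathrm{Tr}[\rho' T(\mathbf{A}^*)]$ --- justifying this requires the Schur--Horn theorem together with the lemma, exactly as in the paper's derivation of eq.~\eqref{nf}.
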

\begin{proof} 
We first prove that 
\begin{equation}
\mathrm{C}_d(\rho) = \max_{{\bf t} \in \Lambda_d} 
[{\bf t} \cdot {\boldsymbol \lambda} (\rho) ] \label{nf}
\end{equation}
where $\Lambda_d=\{ \boldsymbol \lambda[T({\bf A})] 
: {\bf A} \in {\cal A}_d \}$. 
For that purpose, we write 
$\mathrm{Tr} [ \rho T({\bf A}) ]=\sum_{i=1}^d t_i p_i$ 
where $p_i=\langle i |\rho |i \rangle$, $|i \rangle$ denotes 
the eigenvectors of $T({\bf A})$, and ${\bf t}$ its spectrum. 
The Schur-Horn theorem gives 
$\boldsymbol \lambda(\rho) \succ (p_1, \ldots, p_d)$ 
\cite{MO}. Thus, using the lemma, we obtain 
$\mathrm{Tr} [ \rho T({\bf A}) ] \le 
{\bf t} \cdot {\boldsymbol \lambda} (\rho)$, 
which results in 
$\mathrm{C}_d(\rho) \le \max_{{\bf t} \in \Lambda_d} 
[{\bf t} \cdot {\boldsymbol \lambda} (\rho)]$. 

Consider ${\bf t} \in \Lambda_d$. By definition of 
$\Lambda_d$, there is $\bf A \in {\cal A}_d$ such that 
$\boldsymbol \lambda[T({\bf A})]={\bf t}$. Consider 
${\bf B}$ defined by $B_k=U^\dag A_k U$ where $U$ 
is any unitary operator. ${\bf B}$ belongs to ${\cal A}_d$ 
(see proof of point \ref{Uinv} of proposition \ref{bp}). 
Moreover, 
$T({\bf B})=\sum_{i=1}^d t_i U^\dag |i \rangle\langle i |U$. 
Therefore, there is ${\bf \tilde A} \in {\cal A}_d$ such that 
the spectrum of $T({\bf \tilde A})$ is ${\bf t}$, and 
its eigenvectors are identical to those of $\rho$. Hence, 
$\sum_{i=1}^d t_i \lambda_i(\rho)
=\mathrm{Tr} [\rho T({\bf \tilde A}) ] \le \mathrm{C}_d(\rho)$,
which gives the second inequality required to prove 
eq.\eqref{nf}.

Consider a state ${\tilde \rho '}$ of a $d$-level system, 
with the same nonzero eigenvalues than $\rho'$. 
Since $\rho \succ \rho'$, 
$\boldsymbol \lambda (\rho) \succ 
\boldsymbol \lambda (\tilde \rho')$. Thus, using the lemma 
and the form \eqref{nf}, we have 
${\bf t} \cdot {\boldsymbol \lambda} (\tilde \rho') 
\le \mathrm{C}_d(\rho)$ for any ${\bf t} \in \Lambda_d$. 

For any ${\bf A'} \in {\cal A}_{d'}$, expression \eqref{Gfun} 
gives ${\bf A} \in {\cal A}_{d}$ such that 
the components of ${\bf t}=\boldsymbol \lambda[T({\bf A})]$ 
are the $d'$ eigenvalues $t'_i=\lambda_i[T({\bf A'})]$, 
and $d-d'$ ones, arranged in decreasing order. Thus, 
$t_i = t'_i$ if $t'_i \ge 1$, and $t_i = 1$ or $t'_{i-j}$ where 
$j \ge d-d'$, if $t'_i < 1$. So, for $i \le d'$, $t'_i \le  t_i$, 
and hence $\mathrm{Tr} [ \rho' T({\bf A'}) ] \le
\sum_{i=1}^{d'} t'_i \lambda_i(\rho') 
\le \sum_{i=1}^d  t_i \lambda_i(\tilde \rho')$, which, 
together with the above inequality, leads to the result.
\end{proof}
For a von Neumann measurement, the resulting states 
$\rho_m =\Pi_m \rho \Pi_m/\mathrm{Tr}(\Pi_m \rho)$ 
where $\Pi_m$ are projectors and $\rho$ is the initial state, 
have vanishing eigenvalues. To study their ability 
to violate inequality \eqref{nci}, it is convenient to define
\begin{equation}
\mathrm{C}_d^{(r)} = \max_{{\bf t} \in \Lambda_d} 
\sum_{i=1}^r t_i/r ,  \label{Cdr}
\end{equation}
where $r \le d$. Noting that 
$\mathrm{C}_d^{(r)} =\mathrm{C}_d(\Pi/r)$ 
where $\Pi$ is any rank-$r$ projector, see eq.\eqref{nf}, 
and using proposition \ref{Sc}, the following properties 
of $\mathrm{C}_d^{(r)}$ and $\mathrm{C}_d$ can be 
proved. Any density matrix $\rho$ of rank $r$, 
satisfies $\rho \succ \Pi/r$, and hence 
$\mathrm{C}_d(\rho) \ge \mathrm{C}_d^{(r)}$. 
Consequently, if $\mathrm{C}_d^{(r)}>1$, 
a $d$-level system in such a state $\rho$, can violate
inequality \eqref{nci}. Since $\Pi/r \succ \Pi'/r'$ 
where $r'\ge r$ and $\Pi'$ is a rank-$r'$ projector, 
$\mathrm{C}_d^{(r)}$ decreases as $r$ increases, and 
increases with $d$. 

The function C$_d$ reaches 
its maximum $\mathrm{C}_d^{(1)}$ for pure states, 
which majorize any other state, and its minimum 
$\mathrm{C}_d^{(d)}$ for the maximally mixed state 
$I_d/d$, which is majorized by any state of rank not 
larger than $d$. Thus, these two extreme values 
determine, for a $d$-level system, whether inequality 
\eqref{nci} can be disobeyed or not, for all states or not. 
If $\mathrm{C}_d^{(1)} \le 1$, eq.\eqref{nci} is satisfied 
with any observables $A_k$ obeying the required 
commutation relations, for any system state $\rho$. 
It is then not a proper contextuality test for dimension 
$d$. If $\mathrm{C}_d^{(d)} > 1$, inequality \eqref{nci} 
can be violated for any state $\rho$, but it may remain 
necessary to choose the observables $A_k$ according 
to $\rho$. 
If $\mathrm{C}_d^{(d)} \le 1< \mathrm{C}_d^{(1)}$, 
observables $A_k$ can be found to disobey eq.\eqref{nci} 
or not, depending on the spectrum of $\rho$. 

Since $\mathrm{C}_d^{(1)}$ increases with $d$, 
if inequality \eqref{nci} is a contextuality test for 
a dimension $d'$, it is also so for dimensions $d \ge d'$. 
Relations \eqref{Gfun} lead to
$\mathrm{C}_d^{(d)} \ge 1
+ (\mathrm{C}_{d'}^{(d')}-1)d'/d$,
for $d \ge d'$. Thus, if eq.\eqref{nci} can be violated 
for any state of a $d'$-level system, this is also the case 
for a larger system. The increase with $d$ of 
$\mathrm{C}_d^{(r)}$ obviously does not guarantee 
that it exceeds $1$ for large enough $d$. Below, we show 
that this is actually the case, under the only assumption 
that inequality \eqref{nci} is not always satisfied, and draw 
consequences for von Neumann preparation 
measurements.
\begin{prop}
Consider a state $\rho$, and projectors $\Pi_m$ 
such that $\sum_m \Pi_m = I_d$, their ranks are not 
larger than $r$, and the rank of $\Pi_1$ is $r$, of 
a $d$-level system, and define ${\cal E}$ the set of $m$ 
such that $\mathrm{Tr}(\Pi_m \rho) \ne 0$. Assume there is 
$d'$ such that $\mathrm{C}_{d'}^{(1)} > 1$, and $d \ge d'$.
\begin{enumerate}[label=\roman*)]
\item If $r \le d-d'+1$, then 
$\mathrm{C}_d (\Pi_m,\rho) > 1$ 
for any $m \in {\cal E}$.\label{mi}\item If $r > d-d'+1$, 
$d \ge 2d'-3$, and $\mathrm{Tr}(\Pi_1 \rho) \ne 1$, then 
$\mathrm{C}_d (\Pi_m,\rho) > 1$ for at least one 
$m \in {\cal E}$. 
\label{mii}
\end{enumerate}
\label{m}
\end{prop}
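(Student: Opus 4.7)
The plan is to reduce both parts to the single intermediate claim that $\mathrm{C}_d^{(r)} > 1$ whenever $r \le d - d' + 1$, under the standing hypothesis $\mathrm{C}_{d'}^{(1)} > 1$ and $d \ge d'$. Given this claim, the conclusion follows from the bound $\mathrm{C}_d(\Pi_m,\rho) \ge \mathrm{C}_d^{(\mathrm{rank}(\Pi_m))}$ valid for any $m \in {\cal E}$: indeed, $\rho_m = \Pi_m \rho \Pi_m/\mathrm{Tr}(\Pi_m \rho)$ is supported in the range of $\Pi_m$, hence $\rho_m \succ \Pi/\mathrm{rank}(\Pi_m)$ for any rank-$\mathrm{rank}(\Pi_m)$ projector $\Pi$, and proposition \ref{Sc} applied with equal dimensions, together with $\mathrm{C}_d(\Pi/k) = \mathrm{C}_d^{(k)}$ (see eq.~\eqref{nf}), yields the bound.

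To establish the intermediate claim, I would use the ${\cal G}$ construction. Applying eq.~\eqref{nf} to a pure eigenstate of $T(\cdot)$ produces ${\bf A}' \in {\cal A}_{d'}$ with $\lambda_1[T({\bf A}')] = \mathrm{C}_{d'}^{(1)} > 1$. Setting ${\bf A} = {\cal G}({\bf A}') \in {\cal A}_d$, the spectral description given just after \eqref{Gfun} tells us that the spectrum of $T({\bf A})$ is the multiset formed by the $d'$ eigenvalues of $T({\bf A}')$ together with $d - d'$ copies of $1$. When sorted in decreasing order, the first entry strictly exceeds $1$, and the top $d - d' + 1$ entries are all $\ge 1$ because the single eigenvalue $\lambda_1[T({\bf A}')] > 1$ together with the $d - d'$ ones already accounts for that many values $\ge 1$. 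Thus, for any $r \le d - d' + 1$, the mean of the top $r$ entries strictly exceeds $1$, and by definition~\eqref{Cdr}, $\mathrm{C}_d^{(r)} > 1$.

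Part \ref{mi} then follows immediately: for every $m \in {\cal E}$, $\mathrm{rank}(\Pi_m) \le r \le d - d' + 1$, so monotonicity of $\mathrm{C}_d^{(k)}$ in $k$ gives $\mathrm{C}_d(\Pi_m,\rho) \ge \mathrm{C}_d^{(r)} > 1$. For part \ref{mii}, the hypothesis $\mathrm{Tr}(\Pi_1 \rho) \ne 1$ combined with $\sum_m \mathrm{Tr}(\Pi_m \rho) = 1$ forces some $m \ne 1$ to lie in ${\cal E}$. Since $\sum_m \Pi_m = I_d$ makes the Hermitian projectors $\Pi_m$ mutually orthogonal, $\Pi_m \le I_d - \Pi_1$ has rank at most $d - r$ for $m \ne 1$. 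Combining $r > d - d' + 1$ with $d \ge 2d' - 3$ then yields $\mathrm{rank}(\Pi_m) \le d - r \le d' - 2 \le d - d' + 1$, so the intermediate claim applies and $\mathrm{C}_d(\Pi_m,\rho) > 1$ for this $m$.

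The main obstacle is the intermediate claim, and what makes it work is the specific fact that the $d - d'$ extra eigenvalues introduced by ${\cal G}$ are exactly $1$---the threshold value appearing in eq.~\eqref{nci}---so that the strict excess of $\lambda_1[T({\bf A}')]$ over $1$ persists after averaging over the top $r$ entries, up to the critical rank $r = d - d' + 1$. Once this structural input is secured, part \ref{mi} is immediate and the constant $2d' - 3$ in part \ref{mii} is forced by the inequality chain $d - r \le d' - 2 \le d - d' + 1$.
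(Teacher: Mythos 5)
Your proposal is correct and follows essentially the same route as the paper: you establish $\mathrm{C}_d^{(r)}>1$ for $r\le d-d'+1$ via the ${\cal G}$ construction, then pass from $\rho_m\succ\Pi_m/r_m$ through Proposition \ref{Sc} to $\mathrm{C}_d(\rho_m)\ge\mathrm{C}_d^{(r_m)}$, and close part \ref{mii} with the rank bound $d-r\le d'-2\le d-d'+1$ exactly as the paper does. No gaps.
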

\begin{proof} 
We first show that $\mathrm{C}_d^{(r)} > 1$ 
for $r \le d-d'+1$. 
Since $\mathrm{C}_{d'}^{(1)} > 1$, there is 
${\bf A'} \in {\cal A}_{d'}$ such that $t'_1>1$, where 
${\bf t'}={\boldsymbol \lambda}[T({\bf A'})]$, see 
eq.\eqref{Cdr}. Consider ${\bf A}\in {\cal A}_d$, 
following from eq.\eqref{Gfun}, and denote by ${\bf t}$ 
the spectrum of $T({\bf A})$. We have $t_1=t'_1$ 
and $t_i \ge 1$ for $i \le d-d'+1$. Consequently, 
$\sum_{i=1}^r t_i/r \ge 1+(t'_1-1)/r>1$. 

\noindent We define 
$\rho_m=\Pi_m \rho \Pi_m/\mathrm{Tr}(\Pi_m \rho)$ 
for any $m \in {\cal E}$.

i) Since $\rho_m \succ \Pi_m/r_m$ where $r_m$ 
is the rank of $\Pi_m$, and $r_m \le r$, proposition 
\ref{Scg} gives 
$\mathrm{C}_d (\rho_m) \ge C_d^{(r_m)}\ge C_d^{(r)}$. 
So, using the above result, we get 
$\mathrm{C}_d (\rho_m)>1$.

ii) There is  $m \ne 1$ such that 
$\mathrm{Tr}(\Pi_{m} \rho) \ne 0$. The rank $r_m$ of 
$\rho_{m}$ is not larger than $d-r \le d-d'+1$, and 
hence, $\mathrm{C}_d (\rho_{m})\ge C_d^{(r_m)} > 1$.
\end{proof}
For dimensions $d \ge 2d'-3$ where $d'$ is such that 
$\mathrm{C}_{d'}^{(1)} > 1$, it results from 
proposition \ref{m} that inequality \eqref{nci} 
cannot be violated after the preparation measurement, 
only if  a single outcome of this measurement has 
nonzero probability, and $\mathrm{C}_d (\rho) \le 1$ 
where $\rho$ is the initial state. This last condition 
comes from the fact that the sole 
post-measurement state is equal to $\rho$.
\begin{cor}
Consider a state $\rho$, and projectors $\Pi_m$ 
such that $\sum_m \Pi_m = I_d$, of a $d$-level system. 
Assume there is $d'$ such that 
$\mathrm{C}_{d'}^{(1)} > 1$, and $d \ge 2d'-3$.

$\mathrm{C}_d (\Pi_m,\rho) \le 1$ for all $\Pi_m$ such 
that $\mathrm{Tr}(\Pi_m \rho) \ne 0$ iff 
$\mathrm{Tr}(\Pi_m \rho)=1$ for one $m$, 
and $\mathrm{C}_d (\rho) \le 1$.
\end{cor}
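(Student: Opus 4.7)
The plan is to establish the two implications separately. The $(\Leftarrow)$ direction is nearly immediate: if $\mathrm{Tr}(\Pi_{m_0}\rho)=1$ for some $m_0$, then $\sum_m \mathrm{Tr}(\Pi_m\rho)=\mathrm{Tr}(\rho)=1$ forces $\mathrm{Tr}(\Pi_m\rho)=0$ for every other $m$, so ${\cal E}=\{m_0\}$. Moreover $\rho$ is supported on the range of $\Pi_{m_0}$, whence $\Pi_{m_0}\rho\Pi_{m_0}=\rho$, so the defining formula for $\mathrm{C}_d(F,\rho)$ yields $\mathrm{C}_d(\Pi_{m_0},\rho)=\mathrm{C}_d(\rho)\le 1$ by hypothesis.

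For the $(\Rightarrow)$ direction I argue by contrapositive: assuming the right-hand side of the iff fails, I exhibit some $m\in{\cal E}$ with $\mathrm{C}_d(\Pi_m,\rho)>1$. I split into two cases. First, if there exists $m_0$ with $\mathrm{Tr}(\Pi_{m_0}\rho)=1$ but $\mathrm{C}_d(\rho)>1$, then the same identity $\mathrm{C}_d(\Pi_{m_0},\rho)=\mathrm{C}_d(\rho)$ immediately gives the desired violation. Second, if no outcome has probability $1$, relabel so that $\Pi_1$ has the maximum rank $r$ among the $\Pi_m$, and apply proposition \ref{m}. When $r\le d-d'+1$, part \ref{mi} gives $\mathrm{C}_d(\Pi_m,\rho)>1$ for every $m\in{\cal E}$, and ${\cal E}\ne\emptyset$ because $\sum_m\mathrm{Tr}(\Pi_m\rho)=1$. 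When $r>d-d'+1$, the hypotheses $d\ge 2d'-3$ (from the corollary) and $\mathrm{Tr}(\Pi_1\rho)\ne 1$ (from the case assumption) activate part \ref{mii}, producing some $m\in{\cal E}$ with $\mathrm{C}_d(\Pi_m,\rho)>1$.

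There is no real obstacle beyond bookkeeping. The one non-mechanical step is the reduction $\mathrm{C}_d(\Pi_{m_0},\rho)=\mathrm{C}_d(\rho)$ whenever $\mathrm{Tr}(\Pi_{m_0}\rho)=1$, which follows from the observation that the post-measurement state then coincides with $\rho$ itself. The dimensional conditions in the corollary are precisely those needed to trigger the two branches of proposition \ref{m}, and the only freedom used is the harmless relabeling that puts the projector of maximum rank into position $1$.
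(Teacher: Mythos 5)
Your proof is correct and follows essentially the same route as the paper: both directions hinge on Proposition 3 together with the observation that $\mathrm{Tr}(\Pi_m\rho)=1$ forces $\rho=\Pi_m\rho\Pi_m$ and hence $\mathrm{C}_d(\Pi_m,\rho)=\mathrm{C}_d(\rho)$. Your version merely makes explicit the case split (outcome of probability one versus not) that the paper's two-line argument leaves implicit.
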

\begin{proof}
If $\mathrm{Tr}(\Pi_m \rho)=1$, then
$\mathrm{Tr}(\Pi_{m'} \rho)=0$ 
for any $m' \ne m$, and
$\rho=\sum_{m',m''} \Pi_{m'} \rho \Pi_{m''}
=\Pi_m \rho \Pi_m$.

If $\mathrm{C}_d (\Pi_{m'},\rho) \le 1$ for all appropriate 
$\Pi_{m'}$, then, due to proposition \ref{m}, 
$\mathrm{Tr}(\Pi_m \rho)=1$ where $\Pi_m$ is 
the projector of largest rank, and thus 
$\rho=\Pi_m \rho \Pi_m$. 
\end{proof}
Proposition \ref{m} concerns the post-measurement 
states $\rho_m$, and the possibility to violate inequality 
\eqref{nci} after one of them was selected. If, on 
the contrary, the measurement is unread, the state 
of the system after it, is 
$\rho' = \sum_m \Pi_m \rho \Pi_m$, 
which obeys $\rho \succ \rho'$, due to quantum 
Hardy-Littlewood-P{\'o}lya theorem \cite{QMT}. 
It follows from proposition \ref{Sc}, that eq.\eqref{nci} is 
always obeyed for $\rho'$ if $\mathrm{C}_{d} (\rho) \le 1$. 
Interesting measurements are dichotomic ones 
with projectors of ranks $d/2$ for even $d$, 
and $(d\pm1)/2$ for odd $d$. They are the most 
inefficient in the sense that there is a projector $\Pi_m$ 
of lower rank for all the other measurements. 
For these measurements, inequality \eqref{nci} can be 
disobeyed for both resulting states, provided 
$d \ge 2d'-2$. The smaller is the dimension $d'$, 
the less demanding are the conditions in proposition 
\ref{m}. The minimum $d'$ such that 
$\mathrm{C}_{d'}^{(1)} > 1$ is $4$ for CHSH inequality 
\cite{CHSH}, and $3$ for KCBS inequality \cite{KCBS}. 

Proposition \ref{m} ensures that, for any state $\rho$ 
with no zero eigenvalues, and any von Neumann 
measurement, there are observables $A_k$ such that 
inequality \eqref{nci} is violated for a post-measurement 
state $\rho_m$, provided $d \ge 2d'-3$. Such 
observables $A_k$ depend a priori on $\rho$. We show 
below that some of them are determined only by 
the considered measurement and inequality.
\begin{prop}
Consider projectors $\Pi_m$ such that 
$\sum_m \Pi_m = I_d$, of a $d$-level system.

If there is $d'$ such that $\mathrm{C}_{d'}^{(1)} > 1$, and 
$d \ge 2d'-3$, then there are ${\bf A} \in {\cal A}_d$ 
and a projector $\Pi_m$, such that 
$\mathrm{Tr} [\rho_mT({\bf A})] > 1$ where 
$\rho_m=\Pi_m \rho \Pi_m/\mathrm{Tr}(\Pi_m \rho)$, 
for all states $\rho$ with no zero eigenvalues, 
of the $d$-level system.
\label{m2}
\end{prop}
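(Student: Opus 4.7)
My plan is to construct ${\bf A}\in {\cal A}_d$ and pick a projector $\Pi_m$ so that $T({\bf A})$, restricted to the range of $\Pi_m$, is bounded below by the identity, with strict inequality in at least one direction of this range. This will be enough, because for any $\rho$ with no zero eigenvalues, $\rho_m=\Pi_m\rho\Pi_m/\mathrm{Tr}(\Pi_m\rho)$ is positive definite on that range.

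First I would take $\Pi_m$ of minimum rank $r^*$ among the projectors (the single-projector case is trivial, so I assume there are at least two). Then $r^*\le\lfloor d/2\rfloor$, and a short parity check using $d\ge 2d'-3$ gives the key rank bound $r^*\le d-d'+1$.

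Next, I would use $\mathrm{C}_{d'}^{(1)}>1$ to choose ${\bf A'}\in {\cal A}_{d'}$ with $t'_1=\lambda_1[T({\bf A'})]>1$, and by point \ref{Uinv} of proposition \ref{bp} applied at dimension $d'$ I may assume the $t'_1$-eigenvector of $T({\bf A'})$ is the first basis vector $|1\rangle$. I would then apply \eqref{Gfun} with an orthonormal basis $\{|\tilde\imath\rangle\}_{i=1}^d$ chosen so that $|\tilde 1\rangle$ together with $|\tilde{d'+1}\rangle,\ldots,|\tilde{d'+r^*-1}\rangle$ form an orthonormal basis of the range of $\Pi_m$; this is possible precisely because $r^*-1\le d-d'$. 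By the block structure of \eqref{Gfun}, these $r^*$ vectors are eigenvectors of $T({\bf A})$, with $|\tilde 1\rangle$ at eigenvalue $t'_1>1$ and the others at eigenvalue $1$, so $\Pi_m T({\bf A})\Pi_m=(t'_1-1)|\tilde 1\rangle\langle\tilde 1|+\Pi_m$. For any full rank $\rho$, this gives $\mathrm{Tr}[\rho_m T({\bf A})]=1+(t'_1-1)\langle\tilde 1|\rho_m|\tilde 1\rangle>1$, with ${\bf A}$ independent of $\rho$.

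The main obstacle is conceptual: one must recognize that the freedom left in \eqref{Gfun} — both the outer basis $\{|\tilde\imath\rangle\}$ and the internal rotation of ${\bf A'}$ inside ${\cal A}_{d'}$ — is exactly enough to embed the range of $\Pi_m$ inside the $\ge 1$ eigenspace of $T({\bf A})$ while keeping at least one direction strictly above $1$. The rank bound $r^*\le d-d'+1$ made available by $d\ge 2d'-3$ is what enables this embedding.
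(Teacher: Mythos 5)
Your proof is correct and takes essentially the same route as the paper's: both select a projector of rank $r\le\lfloor d/2\rfloor\le d-d'+1$, use the embedding \eqref{Gfun} to obtain $T({\bf A})$ with one eigenvalue $t'_1>1$ and at least $r$ eigenvalues $\ge 1$, and align the corresponding eigenvectors with the range of $\Pi_m$, the full rank of $\rho$ guaranteeing strictly positive weight on the $t'_1$ direction. The only differences are cosmetic (you rotate ${\bf A'}$ inside ${\cal A}_{d'}$ and pick the outer basis of \eqref{Gfun}, whereas the paper conjugates the embedded observables by a $d$-dimensional unitary), and your dismissal of the single-projector case as ``trivial'' is the same implicit exclusion the paper makes when it asserts the existence of a projector of rank at most $d/2$.
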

\begin{proof} 
There is at least one projector $\Pi_m$ of rank 
$r \le d/2$. Denote by $| \phi \rangle$ one of 
its eigenvectors with eigenvalue $1$. Since 
$\mathrm{C}_{d'}^{(1)} > 1$, there is 
${\bf B'} \in {\cal A}_{d'}$ such that $t'_1>1$, where 
${\bf t'}={\boldsymbol \lambda}[T({\bf B'})]$. 
Consider ${\bf B}\in {\cal A}_d$, following from 
eq.\eqref{Gfun}, and define 
${\bf t}={\boldsymbol \lambda}[T({\bf B})]$. 
We have $t_1=t'_1$, and, since $r \le d-d' +1$, 
$t_i \ge 1$ for $i \le r$. There is ${\bf A} \in {\cal A}_d$ 
such that the spectrum of $T({\bf A})$ is ${\bf t}$, its first 
$r$ eigenvectors $|i\rangle$ obey 
$\Pi_m|i\rangle=|i\rangle$, and $|1\rangle=|\phi \rangle$ 
(see proof of proposition \ref{Sc}). 
$\mathrm{Tr}(\Pi_m \rho) \ge r \lambda$, 
and $\langle \phi |\rho |\phi \rangle \ge \lambda$, 
where $\lambda= \min_j\lambda_j(\rho) > 0$.
Finally, the above results lead to 
$\mathrm{Tr} [ \rho_m T({\bf A}) ] \ge 1+ (t'_1-1) p$ 
where $p=\langle \phi |\rho |\phi \rangle/
\mathrm{Tr}(\Pi_m \rho) > 0$. 
\end{proof}
For an initial state 
$\rho=\sum_{i=1}^d\lambda_i(\rho)|i\rangle\langle i|$ 
of a $D$-level system, of rank $d < D$, proposition 
\ref{m2} holds for the subspace spanned by 
$\{ |i \rangle \}_{i=1}^d$, if $d$ is large enough. Thus, 
a violation of inequality \eqref{nci} can be achieved 
knowing only the subspace corresponding to the zero, 
or very small, eigenvalues of $\rho$, and choosing 
the preparation measurement, and observables $A_k$, 
accordingly. 

In summary, we have studied the possibility of preparing 
a state that violates a given noncontextuality inequality, 
by performing any von Neumann measurement on any 
initial state. For a large enough system, and an inequality 
which is not always satisfied, we have determined 
necessary and sufficient conditions for the existence 
of observables with which the inequality is violated for 
a state resulting from the preparation measurement. 
For an initial state with no zero eigenvalues, there are 
always such observables, and which do not depend on 
this state. A natural extension of this work is to consider, 
for the preparation stage, general measurements, for 
which only a partial result has been obtained.

\end{document}